\documentclass[11pt,a4paper]{article}
\usepackage{indentfirst,mathrsfs}
\usepackage{amsfonts,amsmath,amssymb,amsthm}
\usepackage{latexsym,amscd}
\usepackage{amsbsy}
\usepackage[all]{xy}
\usepackage{graphicx}
\usepackage{threeparttable}
\usepackage[usenames,dvipsnames,svgnames,table]{xcolor}
\usepackage{cite}




\newtheorem{thm}{Theorem}
\newtheorem{lem}{Lemma}
\newtheorem{cor}{Corollary}

\newtheorem{prop}{Proposition}

\newtheorem{example}{Example}
\newtheorem{remark}{Remark}

\parskip 5pt
\setlength{\oddsidemargin}{0cm} \setlength{\evensidemargin}{0cm}
\setlength{\textwidth}{16cm} \setlength{\textheight}{21cm}
\begin{document}

\title{Symbol-pair Weight Distributions of Some Linear Codes}
\maketitle
\begin{center}\author{\large Junru Ma \quad\quad Jinquan Luo\footnote{Corresponding author
\par
 The authors are with School of Mathematics
and Statistics \& Hubei Key Laboratory of Mathematical Sciences, Central China Normal University, Wuhan China, 430079.
\par  E-mails: junruma@mails.ccnu.edu.cn(J.Ma), luojinquan@mail.ccnu.edu.cn(J.Luo).
}}\end{center}

\begin{quote}
{\small {\bf Abstract:} \ \
In this paper, we investigate the symbol-pair weight distributions of MDS codes and simplex codes over finite fields.
Furthermore, the result shows that all the nonzero codewords of simplex codes have the same symbol $b$-weight and rearrangement entries of codewords in simplex code may induce different symbol-pair weight.
In addition, the symbol $b$-weight distributions of variation simplex codes over certain finite fields are derived.
}

{\small {\bf Keywords:} \ \ Symbol-pair code, \ Symbol-pair weight, \ Symbol $b$-weight, \ MDS codes,\ Simplex codes}

\end{quote}

\section{Introduction}

Symbol-pair codes were first proposed in \cite{CB1} by Cassuto and Blaum.
There exists symbol-pair codes with rates that are strictly higher than the best known codes in the Hamming metric \cite{CB2, CL}.
Chee et al. established a Singleton-type bound on symbol-pair codes and constructed many classes of MDS symbol-pair codes using the known MDS codes and interleaving technique in \cite{CJKWY,CKW}.
Ding et al. extended the Singleton-type bound to the $b$-symbol case, and constructed MDS $b$-symbol codes based on projective geometry \cite{DZG}.
Moreover, several works are contributed to the constructions of symbol-pair codes meeting the Singleton-type bound \cite{CLL,DGZZZ,KZL,KZZLC,LG}.
Kai et al. constructed MDS symbol-pair codes from
{contacyclic} codes \cite{KZL}.
Chen et al. established MDS symbol-pair codes of length $3p$ through repeated-root cyclic codes \cite{CLL}.
Li and Ge provided a number of MDS symbol-pair codes by analyzing certain linear fractional transformations \cite{LG}.
{Ding et al.}
obtained some MDS symbol-pair codes utilizing elliptic curves \cite{DGZZZ}.
Three new families of MDS symbol-pair codes were constructed based on repeated-root codes in \cite{KZZLC}.
Meanwhile, some decoding algorithms of symbol-pair codes were proposed by various researchers.
Yaakobi et al. proposed efficient decoding algorithms for cyclic symbol-pair codes in \cite{YBS2,YBS1}.
A decoding algorithm utilizing the syndrome of symbol-pair codes was provided in \cite{HTM}.
An error-trapping decoding algorithm required to impose some restrictions on the symbol-pair error patterns subsequently was shown in \cite{MHT}.
The decoding algorithm based on linear programming was designed for binary linear symbol-pair codes in \cite{HMH}.
The list decoding of symbol-pair codes was investigated in \cite{LXY}.

There are few works on studying the symbol-pair weight distributions of linear codes.
Symbol-pair distances of a class of repeated-root cyclic codes over $\mathbb{F}_{p^m}$ were determined in \cite{SZW}.
Dinh et al. successfully derived the symbol-pair distances for all constacyclic codes of length $p^s$ and $p^{2s}$ over $\mathbb{F}_{p^m}$ \cite{DNSS1,DWLS}.
The Hamming and symbol-pair distances of all the repeated-root constacyclic codes of length $p^s$ over $\mathcal{R}=\mathbb{F}_{p^m}+u\mathbb{F}_{p^m}$ were presented in \cite{DNSS2}.

In this paper we will focus on the symbol-pair weight distribution of MDS codes and simplex codes.
The contributions of this paper are two-fold.
The first one is to derive the symbol-pair weight distribution of MDS codes.
The second one is to calculate the symbol $b$-weight of codewords in simplex codes for any $b\leq q-1$ and rearranging of coordinates in simplex codes, one may obtain different symbol-pair weight.
In addition, for odd $b$ with $3\leq b\leq q-1$, the symbol $b$-weight of simples code over $\mathbb{F}_{3}$ and for $m\geq 2$, the symbol $p$-weight of codewords in  variation simplex codes over $\mathbb{F}_{p}$ is determined.

The remainder of this paper is organized as follows.
In Section $2$, we introduce some preliminary knowledge and auxiliary results.
The symbol-pair weight distribution of MDS codes are determined in Section $3$.
In Section $4$, we calculate the symbol $b$-weight distribution of simplex codes.
Section $5$ concludes this paper.

\section{Preliminaries}

Let $\mathbb{F}_{q}$ be the finite field with $q$ elements, where $q$ is a prime power.
Let $n$ be a positive integer.
The {\it absolute trace} function ${\rm Tr}\left(x\right)$ from $\mathbb{F}_{p^{m}}$ to $\mathbb{F}_{p}$ is given by
\begin{equation*}
{\rm Tr}\left(x\right)=\sum\limits_{i=0}^{m-1}x^{p^{i}}=x+x^{p}+\cdots+x^{p^{m-1}}.
\end{equation*}
Each element of $\mathbb{F}_{q}$ is called a {\it symbol}.
In the rest of this paper, the subscripts will always be taken by modulo $n$ and $\mathbf{0}$ denotes the all-zero vector.
Let $b$ be a positive integer.
For a vector $\mathbf{x}=\left(x_{0}, x_{1},\cdots, x_{n-1}\right)$, we define the {\it $b$-symbol read vector} of $\mathbf{x}$ as
\begin{equation*}
\pi_{b}\left(\mathbf{x}\right)=\left(\left(x_{0},\cdots,x_{b-1}\right),
\,\left(x_{1},\cdots,x_{b}\right),\cdots,\left(x_{n-1},
\,x_{0},\cdots,x_{b-2}\right)\right).
\end{equation*}
For any two vectors $\mathbf{x},\,\mathbf{y}$ in $\mathbb{F}_{q}^{n}$, we have
\begin{equation*}
\pi_{b}\left(\mathbf{x}+\mathbf{y}\right)=\pi_{b}\left(\mathbf{x}\right)
+\pi_{b}\left(\mathbf{y}\right).
\end{equation*}
Recall that the (\,Hamming\,) {\it distance} $d\left(\mathbf{x},\,\mathbf{y}\right)$ between two vectors $\mathbf{x},\,\mathbf{y}\in \mathbb{F}_{q}^{n}$ is the number of coordinates in which $\mathbf{x}$ and $\mathbf{y}$ differ and the (\,Hamming\,) {\it weight} of a vector $\mathbf{x}$ is the number of nonzero coordinates in $\mathbf{x}$, i.e.,
\begin{equation*}
w_{H}\left(\mathbf{x}\right)=d\left(\mathbf{x},\,\mathbf{0}\right)=
\#\left\{\,i\,|\,x_{i}\neq 0,\,i\in \mathbb{Z}_{n}\right\}
\end{equation*}
where $\mathbb{Z}_{n}$ denotes the ring $\mathbb{Z}/n\mathbb{Z}$.
Accordingly, the {\it symbol $b$-distance} between $\mathbf{x}$ and $\mathbf{y}$ is defined as
\begin{equation*}
d_{b}\left(\mathbf{x},\,\mathbf{y}\right)
=d\left(\pi_{b}(\mathbf{x}),\,\pi_{b}(\mathbf{y})\right)
=\#\left\{\,i\,\big|\,\left(x_{i},\cdots, x_{i+b-1}\right)\neq\left(y_{i},\cdots,y_{i+b-1}\right),\,i\in\mathbb{Z}_{n}\right\}
\end{equation*}
and the {\it symbol $b$-weight} of $\mathbf{x}\in \mathbb{F}_{q}^{n}$ is denoted as
\begin{equation*}
w_{b}(\mathbf{x})=d_{b}\left(\mathbf{x},\,\mathbf{0}\right)
=\#\left\{\,i\,\big|\,\left(x_{i},\cdots,x_{i+b-1}\right)\neq \mathbf{0},\,i\in \mathbb{Z}_{n}\right\}.
\end{equation*}
In particular, when $b=2$, for any two vectors $\mathbf{x},\,\mathbf{y}\in \mathbb{F}_{q}^{n}$, the {\it symbol-pair distance} between $\mathbf{x}$ and $\mathbf{y}$ is
\begin{equation*}
d_{2}\left(\mathbf{x},\,\mathbf{y}\right)
=d\left(\pi_{2}(\mathbf{x}),\,\pi_{2}(\mathbf{y})\right)
=\#\left\{\,i\,\big|\,\left(x_{i},\,x_{i+1}\right)\neq \left(y_{i},\,y_{i+1}\right),\,i\in \mathbb{Z}_{n}\right\}
\end{equation*}
and the {\it symbol-pair weight} of a vector $\mathbf{x}$ is
\begin{equation*}
w_{2}\left(\mathbf{x}\right)=d_{2}\left(\mathbf{x},\mathbf{0}\right)
=\#\left\{\,i\,\big|\,(x_{i},\,x_{i+1})\neq \left(0,\,0\right),\,i\in \mathbb{Z}_{n}\right\}.
\end{equation*}
It is shown in \cite{YBS2} that
\begin{equation*}
d\left(\mathbf{x},\,\mathbf{y}\right)+1\leq d_{2}\left(\mathbf{x},\,\mathbf{y}\right)\leq 2\cdot d\left(\mathbf{x},\,\mathbf{y}\right).
\end{equation*}

A {\it code} $\mathcal{C}$ over $\mathbb{F}_{q}$ of length $n$ is a nonempty subset of $\mathbb{F}_{q}^{n}$.
Elements of $\mathcal{C}$ are called {\it codewords}.
If $\mathcal{C}$ is a subspace of $\mathbb{F}_{q}^{n}$, then $\mathcal{C}$ is called a {\it linear} code.
Denote by $\left[n,\,k,\,d\right]_{q}$ the linear code of length $n$, dimension $k$ and minimum distance $d$ over $\mathbb{F}_{q}$.
The code $\mathcal{C}$ is said to have {\it minimum symbol-pair distance} $d_{2}$ if
\begin{equation*}
d_{2}={\rm min}\left\{d_{2}\left(\mathbf{x},\,\mathbf{y}\right)\,\big|\, \mathbf{x},\,\mathbf{y}\in \mathcal{C}, \mathbf{x}\neq\mathbf{y})\right\}.
\end{equation*}
The minimum symbol-pair distance is an important parameter for a symbol-pair code.
A code $\mathcal{C}$ with minimum symbol-pair distance $d_{2}$ can
correct up to $\lfloor\frac{d_{2}-1}{2}\rfloor$ symbol-pair errors, see \cite{CB1}.
The size of a symbol-pair code satisfies the Singleton bound.

\begin{lem}{\rm(\!{\cite{CJKWY}}\,)}\label{Singleton}
Let $q\geq 2$ and $2\leq d_{2}\leq n$.
If $\mathcal{C}$ is a symbol-pair code with length $n$ and minimum symbol-pair distance $d_{2}$, then $\left|\mathcal{C}\right|\leq q^{n-d_{2}+2}$.
\end{lem}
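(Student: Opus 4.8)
The plan is to establish the bound by a puncturing argument, mirroring the classical proof of the Singleton bound but adapted to the cyclic pair-reading structure $\pi_2$. Concretely, the goal is to exhibit an injective map from $\mathcal{C}$ into a coordinate space of dimension $n-d_2+2$; once such a map is in hand, the bound $\left|\mathcal{C}\right|\le q^{\,n-d_2+2}$ follows immediately by counting.

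First I would delete a block of $d_2-2$ cyclically consecutive coordinates, say the positions $B=\{j,j+1,\dots,j+d_2-3\}\subseteq\mathbb{Z}_{n}$ (indices modulo $n$), and let $\rho:\mathbb{F}_{q}^{n}\to\mathbb{F}_{q}^{\,n-d_2+2}$ be the projection onto the surviving coordinates $\mathbb{Z}_{n}\setminus B$. Here the hypotheses $2\le d_2\le n$ are used to guarantee that $B$ is a genuine set of size $d_2-2$ (empty when $d_2=2$) and that $n-d_2+2\ge 2$ coordinates remain.

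The heart of the argument is to show that the restriction of $\rho$ to $\mathcal{C}$ is injective. Suppose toward a contradiction that two codewords $\mathbf{x}\neq\mathbf{y}$ satisfy $\rho(\mathbf{x})=\rho(\mathbf{y})$; then $\mathbf{x}$ and $\mathbf{y}$ agree on every coordinate outside $B$, so they can differ only within the consecutive block $B$. Consequently the only pair-indices $i$ at which $(x_{i},x_{i+1})$ and $(y_{i},y_{i+1})$ can disagree are those for which $\{i,i+1\}\cap B\neq\emptyset$, namely $i\in\{j-1,j,\dots,j+d_2-3\}$, a set of exactly $d_2-1$ indices. Hence $d_{2}(\mathbf{x},\mathbf{y})\le d_2-1$, and since $\mathbf{x}\neq\mathbf{y}$ this contradicts the assumption that the minimum symbol-pair distance of $\mathcal{C}$ equals $d_2$. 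Therefore $\rho|_{\mathcal{C}}$ is injective and $\left|\mathcal{C}\right|=\left|\rho(\mathcal{C})\right|\le q^{\,n-d_2+2}$.

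The step I expect to require the most care is the bookkeeping in the previous paragraph: one must count the affected pair-positions under the cyclic (mod $n$) reading and verify that choosing the deleted coordinates to be \emph{consecutive} is precisely what forces this count down to $(d_2-2)+1=d_2-1$; a non-consecutive deletion would leave more pairs potentially differing and the contradiction would collapse. I would also verify the boundary case $d_2=2$, where $B=\emptyset$, the map $\rho$ is the identity, and the bound reads $\left|\mathcal{C}\right|\le q^{n}$, so that the argument is uniformly valid across the full range $2\le d_2\le n$.
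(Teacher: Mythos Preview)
Your argument is correct: deleting $d_2-2$ cyclically consecutive coordinates and observing that two codewords agreeing outside this block can differ in at most $d_2-1$ pair-positions gives the desired injection into $\mathbb{F}_q^{\,n-d_2+2}$. The bookkeeping you flag is handled properly, and the boundary case $d_2=2$ is fine.

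As for comparison with the paper: the paper does not supply its own proof of this lemma at all. It is quoted as a known result from \cite{CJKWY} and used as background. Your puncturing argument is in fact essentially the proof given in that reference, so there is nothing to contrast.
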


A symbol-pair code achieving the Singleton bound is called a {\it maximum distance separable} (\,MDS\,) symbol-pair code.
A classical MDS code is also an MDS symbol-pair code.
For any $1<n<q$, we choose $\mathbf{a}=(a_{1},a_{2},\cdots,a_{n})$ where $a_{i}$ are distinct elements in $\mathbb{F}_{q}$.
Then the {\it Reed-Solomon} (\,RS\,) code of length $n$ associated with $\mathbf{a}$ is
\begin{equation*}
\mathbf{RS}_{k}\left(\mathbf{a}\right)=
\left\{\left(f(a_{1}),\cdots,f(a_{n})\right)\,\big|\,f(x)\in \mathbb{F}_{q}[x],\,{\rm deg}\left(f(x)\right)\leq k-1\right\}
\end{equation*}
where $1\leq k\leq n$.
It is known that the code $\mathbf{RS}_{k}(\mathbf{a})$ is an $\left[n,\,k,\,n-k+1\right]_{q}$ MDS code \cite{MS}.

Now denote by $A_{i}$ the number of codewords in $\mathcal{C}$ with weight $i$.
Then the weight distribution of MDS codes is determined in the following lemma.

\begin{lem}{\rm(\!\!{\cite{MS}}\,)}\label{lemwdistribu}
Let $\mathcal{C}$ be an $[n,\,k,\,d\,]_{q}$ MDS code. Then the weight distribution of $\mathcal{C}$ is given by $A_{0}=1$, $A_{i}=0$ for $1\leq i< d$ and
\begin{equation*}
A_{i}=\left(\begin{array}{c}n\\i\end{array}\right)
\sum^{i-d}_{j=0}\left(-1\right)^{j}
\left(\begin{array}{c}i\\j\end{array}\right)
\left(q^{i+1-d-j}-1\right)
\end{equation*}
for $d\leq i\leq n$, where $d=n-k+1$
and $\left(\begin{array}{c}m\\n\end{array}\right)=0$ for $m<n$.
\end{lem}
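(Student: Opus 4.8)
The plan is to prove the formula by a direct combinatorial count of the codewords having a prescribed support, combined with M\"obius inversion over the lattice of coordinate subsets; the only structural input needed is the defining property of MDS codes, namely that any $k$ columns of a generator matrix $G$ of $\mathcal{C}$ are linearly independent.

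First I would fix a coordinate set $T\subseteq\mathbb{Z}_{n}$ with $|T|=i$ and count the number $B_{i}$ of codewords whose support is contained in $T$, i.e. which vanish on the complement $T^{c}$. Writing a codeword as $\mathbf{m}G$ with $\mathbf{m}\in\mathbb{F}_{q}^{k}$, this amounts to solving $\mathbf{m}G_{T^{c}}=\mathbf{0}$, where $G_{T^{c}}$ collects the $n-i$ columns of $G$ indexed by $T^{c}$. The MDS property pins down the rank of $G_{T^{c}}$: when $i\leq d-1$ (so $n-i\geq k$) it has rank $k$ and forces $\mathbf{m}=\mathbf{0}$, whereas when $i\geq d$ (so $n-i\leq k-1$) the $n-i$ columns are linearly independent and the solution space has dimension $k-(n-i)=i-d+1$. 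Hence $B_{i}=q^{\max(0,\,i-d+1)}$; in particular $B_{i}=1$ for $0\leq i\leq d-1$, which already yields $A_{i}=0$ for $1\leq i<d$. I expect this rank computation, and the careful bookkeeping of the two regimes, to be the main (and essentially the only) obstacle.

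Next, letting $f(T)$ denote the number of codewords whose support is exactly $T$, the identity $B_{|T|}=\sum_{S\subseteq T}f(S)$ together with M\"obius inversion on the Boolean lattice gives
\begin{equation*}
f(T)=\sum_{S\subseteq T}(-1)^{|T|-|S|}B_{|S|}=\sum_{s=0}^{i}(-1)^{i-s}\binom{i}{s}B_{s},
\end{equation*}
which depends only on $i=|T|$. Summing over the $\binom{n}{i}$ sets of size $i$ and substituting $s=i-j$, I obtain
\begin{equation*}
A_{i}=\binom{n}{i}\sum_{j=0}^{i}(-1)^{j}\binom{i}{j}B_{i-j}.
\end{equation*}

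Finally I would split this sum at $j=i-d$ according to the two regimes of $B_{i-j}$, namely $B_{i-j}=q^{i+1-d-j}$ for $j\leq i-d$ and $B_{i-j}=1$ for $j>i-d$. The $q$-power block already matches the target; to reconcile the constant terms it suffices to note that $\sum_{j=0}^{i}(-1)^{j}\binom{i}{j}=(1-1)^{i}=0$ for $i\geq d\geq 1$, so that $\sum_{j=i-d+1}^{i}(-1)^{j}\binom{i}{j}=-\sum_{j=0}^{i-d}(-1)^{j}\binom{i}{j}$. Combining the two blocks then produces exactly $A_{i}=\binom{n}{i}\sum_{j=0}^{i-d}(-1)^{j}\binom{i}{j}\left(q^{i+1-d-j}-1\right)$, as claimed.
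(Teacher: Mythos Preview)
Your argument is correct: the rank analysis of $G_{T^{c}}$ via the MDS property, the M\"obius inversion on the Boolean lattice, and the final binomial cancellation all go through exactly as written. Note, however, that the paper does not supply its own proof of this lemma; it is quoted as a classical result from \cite{MS} (Chap.~7, p.~262), and the subsequent remark ``Inspired by the proof of Lemma~\ref{lemwdistribu}\ldots'' indicates that the authors rely on that external reference. Your inclusion--exclusion count of codewords with support contained in a prescribed set is precisely the standard derivation found there, so there is no substantive difference to report.
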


Inspired by the proof of Lemma \ref{lemwdistribu} in Chap. $7$, pp. $262$ of \cite{MS}, one can calculate the symbol-pair weight distribution of MDS codes employing the shortened codes of MDS codes.

Let $\mathcal{C}$ be a code of length $n$ over $\mathbb{F}_{q}$ and $T\subseteq \left[0,\,n-1\right],\,\overline{T}=\left[0,\,n-1\right]\backslash T.$
Define
\begin{equation*}
\mathcal{C}_{T}=\left\{\left(c_j\right)_{j\in \overline{T}}\,\Big|\,{\rm there\;exists\;some}\;\left(c_{0},\,c_{1},\cdots,c_{n-1}\right)\in \mathcal{C}\;{\rm such\;that}\;c_{i}=0\;{\rm for\; all}\; i\in T\right\}.
\end{equation*}
Then we call $\mathcal{C}_{T}$ the code {\it shortened} on $T$ from $\mathcal{C}$.
For more details, see Section 1.5 in \cite{MS}.
The following lemma characterizes that a code shortened on $T$ from an MDS code is also an MDS code.
It will be applied to calculate the symbol-pair weight distribution of MDS codes in Section $3$.

\begin{lem}{\rm(\!{\cite{MS}}\,)}\label{lemshorten}
Let $\mathcal{C}$ be an $\left[n,\,k,\,n-k+1\right]_{q}$ MDS code and $\mathcal{C}_{T}$ be the code shortened on $T$ from $\mathcal{C}$, where $T$ is a set of $t$ coordinates with $t<k$.
Then $\mathcal{C}_{T}$ is an $\left[n-t,\,k-t,\,n-k+1\right]$ MDS code.
\end{lem}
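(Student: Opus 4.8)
The plan is to verify the three defining parameters of $\mathcal{C}_{T}$ separately, and then invoke the Singleton bound (Lemma \ref{Singleton}, in its classical Hamming form) to upgrade the distance estimate into the MDS property. The length is immediate: every element of $\mathcal{C}_{T}$ is indexed by $\overline{T}=\left[0,\,n-1\right]\backslash T$, so $\mathcal{C}_{T}$ has length $n-t$. The real content lies in the dimension and the minimum distance. A convenient device is the auxiliary subcode
\begin{equation*}
\mathcal{C}(T)=\left\{\,\mathbf{c}\in\mathcal{C}\,\big|\,c_{i}=0\text{ for all }i\in T\,\right\},
\end{equation*}
consisting of the codewords that already vanish on $T$; since deleting the (zero) coordinates in $T$ is then injective, $\mathcal{C}_{T}$ is the image of $\mathcal{C}(T)$ under this deletion and $\dim\mathcal{C}_{T}=\dim\mathcal{C}(T)$.

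To compute $\dim\mathcal{C}(T)$, I would view it as the kernel of the evaluation-on-$T$ map $\varphi\colon\mathcal{C}\to\mathbb{F}_{q}^{t}$, $\mathbf{c}\mapsto\left(c_{i}\right)_{i\in T}$. By rank--nullity, $\dim\mathcal{C}(T)=k-\mathrm{rank}\,\varphi\geq k-t$, so the only thing to establish is that $\varphi$ is surjective. This is exactly where the MDS hypothesis enters: an $\left[n,\,k\right]_{q}$ code is MDS precisely when every $k$ columns of a generator matrix $G$ are linearly independent, equivalently every set of $k$ coordinates is an information set. Since $t<k$, I would extend $T$ to a set $S$ of $k$ coordinates; the projection $\mathcal{C}\to\mathbb{F}_{q}^{S}$ is then a bijection, and composing with the further coordinate projection $\mathbb{F}_{q}^{S}\to\mathbb{F}_{q}^{t}$ (forgetting the $k-t$ extra positions) shows $\varphi$ is onto. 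Hence $\mathrm{rank}\,\varphi=t$ and $\dim\mathcal{C}_{T}=\dim\mathcal{C}(T)=k-t$.

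For the minimum distance I would argue that shortening cannot decrease it. Any nonzero $\overline{\mathbf{c}}\in\mathcal{C}_{T}$ lifts to some $\mathbf{c}\in\mathcal{C}(T)$ that is zero on $T$, so the Hamming weight of $\overline{\mathbf{c}}$ on $\overline{T}$ equals the full weight $w_{H}(\mathbf{c})$; as $\mathbf{c}\in\mathcal{C}$ is nonzero we get $w_{H}(\overline{\mathbf{c}})=w_{H}(\mathbf{c})\geq n-k+1$, whence the minimum distance of $\mathcal{C}_{T}$ is at least $n-k+1$. On the other hand, the Singleton bound applied to the $\left[n-t,\,k-t\right]_{q}$ code $\mathcal{C}_{T}$ forces the distance to be at most $(n-t)-(k-t)+1=n-k+1$. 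The two inequalities coincide, so the minimum distance is exactly $n-k+1$ and $\mathcal{C}_{T}$ is an $\left[n-t,\,k-t,\,n-k+1\right]$ MDS code. The main obstacle is the surjectivity step underlying the dimension count; everything else is bookkeeping, and that step is precisely what the MDS information-set property is designed to supply, together with the condition $t<k$ which guarantees $T$ can indeed be enlarged to a full information set.
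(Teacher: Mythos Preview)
Your argument is correct and is essentially the standard proof of this fact. Note, however, that the paper does not actually supply a proof of Lemma~\ref{lemshorten}; it is stated as a known result with a citation to \cite{MS}, so there is no paper-side proof to compare against. Your write-up would serve perfectly well as a self-contained justification: the dimension count via surjectivity of the projection onto $T$ (using that any $t<k$ coordinates extend to an information set, which is the MDS characterization), the distance lower bound from the weight-preserving embedding $\mathcal{C}_T\hookrightarrow\mathcal{C}$, and the matching upper bound from Singleton are exactly the ingredients one finds in \cite{MS}.
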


\section{Symbol-pair weight distribution of MDS codes}

This section will focus on determining the symbol-pair weight distribution of MDS codes.

\begin{thm}\label{thmGRS}
Let $\mathcal{C}$ be an $[n,\,k,\,d]_{q}$ MDS code and $B_{w}$ the number of codewords in $\mathcal{C}$ with symbol-pair weight $w$.
Then
\begin{equation*}
B_{w}=
\left\{
\begin{array}{cl}
1, & w=0,\\[2mm]
0, & 1\leq w\leq d\\
\end{array}
\right.
\end{equation*}
and for $d<w\leq n$,
\begin{equation*}
\begin{split}
B_{w}&=2\sum_{i=1}^{\lfloor M_{1}\rfloor}
\sum^{w-i-d}_{j=0}\left(-1\right)^{j}
\left(\begin{array}{c}n-w+i-1\\i-1\end{array}\right)
\left(\begin{array}{c}w-i-1\\i-1\end{array}\right)
\left(\begin{array}{c}w-i\\j\end{array}\right)
\left(q^{w-i+1-d-j}-1\right)
\\&
+\sum_{i=1}^{\lfloor M_{2}\rfloor}
\sum^{w-i-d}_{j=0}\left(-1\right)^{j}
\left(\begin{array}{c}n-w+i-1\\i-1\end{array}\right)
\left(\begin{array}{c}w-i-1\\i\end{array}\right)
\left(\begin{array}{c}w-i\\j\end{array}\right)
\left(q^{w-i+1-d-j}-1\right)+\varepsilon(w)
\end{split}
\end{equation*}
where
\begin{equation*}
d=n-k+1,\,M_{1}={\rm min}\left\{\frac{w}{2},\,w-d\right\},
\,M_{2}={\rm min}\left\{\frac{w-1}{2},\,w-d\right\}
\end{equation*}
and
\begin{equation*}
\varepsilon(w)=
\left\{
\begin{array}{ll}
\sum_{i=1}^{\lfloor M_{1}\rfloor}
\sum^{w-i-d}_{j=0}\left(-1\right)^{j}
\left(\begin{array}{c}n-w+i-1\\i\end{array}\right)
\left(\begin{array}{c}w-i-1\\i-1\end{array}\right)\cdot&\\
\qquad\qquad\quad\left(\begin{array}{c}w-i\\j\end{array}\right)
\left(q^{w-i+1-d-j}-1\right), & {\rm if}\; d<w\leq n-1;\\[6mm]
\sum^{n-d}_{j=0}\left(-1\right)^{j}
\left(\begin{array}{c}n\\j\end{array}\right)
\left(q^{n+1-d-j}-1\right), & {\rm if}\; w=n.\\
\end{array}
\right.
\end{equation*}
\end{thm}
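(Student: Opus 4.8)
The plan is to reduce the symbol-pair weight of a codeword to its Hamming weight plus the number of cyclic runs of its support, and then to enumerate supports by their run structure while accessing the MDS property through shortening. The first step is the elementary weight identity. For a codeword $\mathbf{c}=(c_0,\dots,c_{n-1})$ with support $S=\{i:c_i\neq0\}$, a coordinate $i$ fails to be counted in $w_2(\mathbf{c})$ precisely when $c_i=c_{i+1}=0$, i.e.\ when both $i$ and $i+1$ lie in the complement $\overline{S}$. Counting these adjacent pairs run-by-run in the cyclic word shows their number equals $(n-w_H(\mathbf{c}))-r(\mathbf{c})$, where $r(\mathbf{c})$ is the number of maximal cyclic runs of nonzero coordinates. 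Hence $w_2(\mathbf{c})=w_H(\mathbf{c})+r(\mathbf{c})$ whenever $0<w_H(\mathbf{c})<n$, while $w_2(\mathbf{c})=n$ when $w_H(\mathbf{c})\in\{n-1,n\}$ and $w_2(\mathbf{0})=0$. Since $w_H\ge d$ and $r\ge1$ for a nonzero codeword, this gives $w_2\ge d+1$, which settles $B_0=1$ and $B_w=0$ for $1\le w\le d$ immediately.

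Next I would fix the number of codewords supported on a prescribed coordinate set. For $T=\overline{S}$ with $|S|=s$, the codewords vanishing off $S$ form the shortened code $\mathcal{C}_{T}$, which by Lemma~\ref{lemshorten} is an $[s,\,s-d+1,\,d]$ MDS code of size $q^{\,s-d+1}$ when $s\ge d$ (and reduces to $\{\mathbf{0}\}$ otherwise). An inclusion--exclusion over the subsets of $S$, exactly as in the proof of Lemma~\ref{lemwdistribu}, then shows that the number of codewords with support \emph{equal} to $S$ depends only on $s$, namely $N(s)=\sum_{j=0}^{s-d}(-1)^{j}\binom{s}{j}(q^{\,s+1-d-j}-1)$; this is the inner $j$-sum appearing in the statement.

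The combinatorial heart is to count $c(n,h,r)$, the number of $h$-subsets of $\mathbb{Z}_n$ whose elements form exactly $r$ cyclic runs. I would obtain this by cutting the cyclic word at the seam between coordinates $n-1$ and $0$ and splitting into four cases according to the pattern $(c_{n-1},c_0)$ of vanishing there; each case converts cyclic runs into linear runs and reduces to counting compositions of $h$ and of $n-h$ into the appropriate number of positive parts. The two ``mixed'' seams each contribute $\binom{h-1}{r-1}\binom{n-h-1}{r-1}$, the all-nonzero seam contributes $\binom{h-1}{r}\binom{n-h-1}{r-1}$, and the all-zero seam contributes $\binom{h-1}{r-1}\binom{n-h-1}{r}$. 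Summing yields exactly the three binomial types in the statement, with the coefficient $2$ coming from the two mixed seams.

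Finally I would assemble $B_w=\sum_{h+r=w}N(h)\,c(n,h,r)$, reindex by $i=r$ so that $h=w-i$, and read the ranges $M_1,M_2$ off the supports of the binomial factors: $\binom{w-i-1}{i}$ forces $i\le(w-1)/2$, the others force $i\le w/2$, and $N(w-i)\ne0$ forces $i\le w-d$. For $w=n$ the all-zero-seam factor $\binom{n-w+i-1}{i}$ vanishes, so only the remaining terms record the codewords with $w_H<n$, and the full-support codewords (which also have symbol-pair weight $n$) must be added separately; their count is precisely $N(n)=\sum_{j=0}^{n-d}(-1)^{j}\binom{n}{j}(q^{\,n+1-d-j}-1)$, the stated $\varepsilon(n)$. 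I expect the main obstacle to be exactly this bookkeeping in the run-counting step: verifying the four seam cases, matching the summation bounds $M_1,M_2$ to the vanishing of the binomial coefficients, and cleanly isolating the $w=n$ boundary where the full-support codewords take the place of the otherwise-vanishing all-zero-seam term.
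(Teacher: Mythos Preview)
Your proposal is correct and follows essentially the same approach as the paper: your four ``seam'' cases $(c_{n-1},c_0)\in\{(0,*),(*,0),(*,*),(0,0)\}$ are exactly the paper's four cases Type-$0$/Type-$1$ crossed with $l$ even/odd, your cyclic-run count $r$ is the paper's $l/2$ or $(l-1)/2$, and both arguments feed the resulting support sizes into the same shortened-MDS weight formula. The only cosmetic difference is that you state the identity $w_2=w_H+r$ explicitly at the outset, whereas the paper recovers it implicitly through the Shape decomposition; your handling of the $w=n$ boundary (adding $N(n)$ for full-support codewords to compensate for the vanishing all-zero-seam term) likewise matches the paper's addition of $A_n$.
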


\begin{proof}
It is easily verified that $B_{0}=1$ and $B_{w}=0$ for any $1\leq w\leq d$.
In the following, we just need to consider the case $d<w\leq n$.
For convenience, we define the following notations.
A codeword $\mathbf{c}=\left(c_{1},\cdots,c_{n}\right)$ is called {\it Type\,-$0$}\,(\,{\it Type}\,-$1$, resp.) if $c_{1}=0$\,(\,$c_{1}\neq 0$, resp.).
Decompose
\begin{equation*}
\left[1,\,n\right]=N_{1}\bigcup N_{2}\bigcup\cdots\bigcup N_{l}
\end{equation*}
where
\begin{equation*}
N_{i}=\left[n_{1}+\cdots+n_{i-1}+1,n_{1}+\cdots+n_{i}\right]
\end{equation*}
with $n_{0}=0$, $\sum^{l}_{i=1}{n_i}=n$ and $|N_{i}|=n_{i}\geq1$ for each $1\leq i\leq l$.
Then we call a codeword $\mathbf{c}$ having {\it Shape} $\left(N_{1},\,N_{2},\cdots,N_{l}\right)$ if its entries satisfy:
\begin{itemize}
\item
for any $j\in N_{i}$, $c_{j}$ is always zero\,(\,nonzero, resp.);
\item
for $1\leq i\leq l$, if $c_{j}$ is zero (nonzero, resp.) for any $j\in N_{i}$, then $c_{j}$ is nonzero (zero, resp.) for any $j\in N_{i+1}$.
\end{itemize}
Now the problem on deriving $B_{w}$, i.e., the number of codewords in $\mathcal{C}$ with symbol-pair weight $w$, can be transformed to enumerate the codewords in $\mathcal{C}$ with  Shape $\left(N_{1},\,N_{2},\cdots,N_{l}\right)$ and Type\,-$0$ (or Type\,-$1$).
\medskip
\\
$\mathbf{Case\,I\,}$ (\,$l$\, even) :
We first consider the subcase of Type\,-$0$.
In this subcase, for a codeword $\mathbf{c}$ with symbol-pair weight $w$ and Shape $\left(N_{1},\,N_{2},\cdots,N_{l}\right)$, we have
\begin{equation*}
w=n_{2}+n_{4}+\cdots+n_{l}+\frac{l}{2}.
\end{equation*}
Let $P_{1}$ be the number of Shape $\left(N_{1},\,N_{2},\cdots,N_{l}\right)$ such that the corresponding vector $\mathbf{c}\in \mathbb{F}_q^n$ of Type\,-$0$ has symbol-pair weight $w$.
It is straightforward that
\begin{equation*}
\left\{
\begin{array}{l}
n_{1}+n_{2}+\cdots+n_{l}=n,\\[2mm]
n_{2}+n_{4}+\cdots+n_{l}+\frac{l}{2}=w,
\end{array}
\right.
\end{equation*}
which is equivalent to
\begin{equation*}
\left\{
\begin{array}{l}
n_{1}+n_{3}+\cdots+n_{l-1}=n-w+\frac{l}{2},\\[2mm]
n_{2}+n_{4}+\cdots+n_{l}=w-\frac{l}{2}.
\end{array}
\right.
\end{equation*}
Note that
\begin{equation*}
\left\{
\begin{array}{l}
w-\frac{l}{2}\geq \frac{l}{2},\\[2mm]
w-\frac{l}{2}\geq d,\\
\end{array}
\right.
\end{equation*}
which implies that
\begin{equation*}
\frac{l}{2}\leq {\rm min}\left\{\frac{w}{2},\,w-d\right\}=M_{1}.
\end{equation*}
Counting arguments lead to
\begin{equation*}
P_{1}=\sum_{\frac{l}{2}=1}^{\lfloor{M_{1}}\rfloor}
\left(\begin{array}{c}n-w+\frac{l}{2}-1\\\frac{l}{2}-1\end{array}\right)
\left(\begin{array}{c}w-\frac{l}{2}-1\\\frac{l}{2}-1\end{array}\right).
\end{equation*}
In the sequel, we concentrate on determining the number of codewords $\mathbf{c}$ in $\mathcal{C}$ with symbol-pair weight $w$ and Type\,-$0$ corresponding to each Shape $(N_{1},\,N_{2},\cdots,N_{l})$.
Let
\begin{equation*}
T_{1}=N_{1}\bigcup N_{3}\bigcup\cdots\bigcup N_{l}
\end{equation*}
and $\mathcal{C}_{T_{1}}$ be the subcode of $\mathcal{C}$ shortened on $T_{1}$.
It follows from Lemma \ref{lemshorten} that $\mathcal{C}_{T_{1}}$ is a
\begin{equation*}
\left[w-\frac{l}{2},\,w-\frac{l}{2}+k-n,\,n-k+1\right]
\end{equation*}
MDS code since $\left|T_{1}\right|=n-w+\frac{l}{2}$.
According to Lemma \ref{lemwdistribu}, the number of codewords in $\mathcal{C}^{'}$ with weight $w-\frac{l}{2}$ is
\begin{equation*}
A_{w-\frac{l}{2}}^{'}=\sum^{w-\frac{l}{2}-d}_{j=0}\left(-1\right)^{j}
\left(\begin{array}{c}w-\frac{l}{2}\\j\end{array}\right)
\left(q^{w-\frac{l}{2}+1-d-j}-1\right).
\end{equation*}
It follows that the number of codewords in $\mathcal{C}$ with symbol-pair weight $w$ and Type\,-$0$ corresponding to each Shape $\left(N_{1},\,N_{2},\cdots,N_{l}\right)$ is $A_{w-\frac{l}{2}}^{'}$.

Note that $A_{w-\frac{l}{2}}^{'}$ only relies on $w$ and $l$, not on specific choices of $N_{1},\,N_{2},\cdots,N_{l}$.
Thus for even $l$, the number of codewords in $\mathcal{C}$ with symbol-pair weight $w$ and Type\,-$0$ is
\begin{equation*}
\begin{split}
B_{w}^{\left(1\right)}&=P_{1}\cdot A_{w-\frac{l}{2}}^{'}
\\&
=\sum_{\frac{l}{2}=1}^{\lfloor M_{1}\rfloor}
\left(\begin{array}{c}n-w+\frac{l}{2}-1\\\frac{l}{2}-1\end{array}\right)
\left(\begin{array}{c}w-\frac{l}{2}-1\\\frac{l}{2}-1\end{array}\right)
\sum^{w-\frac{l}{2}-d}_{j=0}\left(-1\right)^{j}
\left(\begin{array}{c}w-\frac{l}{2}\\j\end{array}\right)
\left(q^{w-\frac{l}{2}+1-d-j}-1\right)
\\&
=\sum_{i=1}^{\lfloor M_{1}\rfloor}
\sum^{w-i-d}_{j=0}\left(-1\right)^{j}
\left(\begin{array}{c}n-w+i-1\\i-1\end{array}\right)
\left(\begin{array}{c}w-i-1\\i-1\end{array}\right)
\left(\begin{array}{c}w-i\\j\end{array}\right)
\left(q^{w-i+1-d-j}-1\right).
\end{split}
\end{equation*}
Similarly, it can be verified that for even $l$, the number of codewords in $\mathcal{C}$ with symbol-pair weight $w$ and Type\,-$1$ is also $B_{w}^{(1)}$.
Therefore, for even $l$, the number of codewords in $\mathcal{C}$ with symbol-pair weight $w$ is equal to $2B_{w}^{(1)}$.
\medskip
\\
$\mathbf{Case\,II\,}$ (\,$l$\, odd) :
A Type\,-$0$ codeword $\mathbf{c}$ having symbol-pair weight $w$ and Shape $\left(N_{1},\,N_{2},\cdots,N_{l}\right)$ satisfies
\begin{equation*}
\left\{
\begin{array}{l}
n_{1}+n_{2}+\cdots+n_{l}=n,\\[2mm]
n_{2}+n_{4}+\cdots+n_{l-1}+\frac{l-1}{2}=w,
\end{array}
\right.
\end{equation*}
which yields
\begin{equation}\label{eqn_i2}
\left\{
\begin{array}{l}
n_{1}+n_{3}+\cdots+n_{l}=n-w+\frac{l-1}{2},\\[2mm]
n_{2}+n_{4}+\cdots+n_{l-1}=w-\frac{l-1}{2}.
\end{array}
\right.
\end{equation}
It is worth noting that
\begin{equation*}
\left\{
\begin{array}{l}
n-w+\frac{l-1}{2}\geq \frac{l-1}{2}+1,\\[2mm]
w-\frac{l-1}{2}\geq \frac{l-1}{2},\\[2mm]
w-\frac{l-1}{2}\geq d,\\
\end{array}
\right.
\end{equation*}
which indicates that $w\leq n-1$ and
\begin{equation*}
\frac{l-1}{2}\leq M_{1}.
\end{equation*}
Hence the number of Shape $\left(N_{1},\,N_{2},\cdots,N_{l}\right)$ satisfying (\ref{eqn_i2}) is
\begin{equation*}
P_{2}=\delta(w)\sum_{\frac{l-1}{2}=1}^{\lfloor M_{1}\rfloor}
\left(\begin{array}{c}n-w+\frac{l-1}{2}-1\\\frac{l-1}{2}\end{array}\right)
\left(\begin{array}{c}w-\frac{l-1}{2}-1\\\frac{l-1}{2}-1\end{array}\right)
\end{equation*}
where
\begin{equation*}
\delta(w)=
\left\{
\begin{array}{cl}
1, & w\leq n-1,\\[2mm]
0, & w=n.\\
\end{array}
\right.
\end{equation*}
Denote
\begin{equation*}
T_{2}=N_{1}\bigcup N_{3}\bigcup\cdots\bigcup N_{l-2}\bigcup N_{l}.
\end{equation*}
Then
\begin{equation*}
\left|T_{2}\right|=n-w+\frac{l-1}{2}.
\end{equation*}
Let $\mathcal{C}_{T_{2}}$ be the subcode of $\mathcal{C}$ shortened on $T_{2}$.
By Lemma \ref{lemshorten}, one can obtain that $\mathcal{C}_{T_{2}}$ is a
\begin{equation*}
\left[w-\frac{l-1}{2},\,w-\frac{l-1}{2}+k-n,\,n-k+1\right]
\end{equation*}
MDS code.
Due to Lemma \ref{lemwdistribu}, the number of codewords in $\mathcal{C}_{T_{2}}$ with weight $w-\frac{l-1}{2}$ is
\begin{equation*}
A_{w-\frac{l-1}{2}}^{''}=\sum^{w-\frac{l-1}{2}-d}_{j=0}\left(-1\right)^{j}
\left(\begin{array}{c}w-\frac{l-1}{2}\\j\end{array}\right)
\left(q^{w-\frac{l-1}{2}+1-d-j}-1\right).
\end{equation*}
Therefore, for odd $l$, the number of codewords in $\mathcal{C}$ with symbol-pair weight $w$ and Type\,-$0$ is
\begin{equation*}
\begin{split}
B_{w}^{(2)}&=P_{2}\cdot A_{w-\frac{l-1}{2}}^{''}
\\&
=\delta(w)\sum_{i=1}^{\lfloor M_{1}\rfloor}
\sum^{w-i-d}_{j=0}\left(-1\right)^{j}
\left(\begin{array}{c}n-w+i-1\\i\end{array}\right)
\left(\begin{array}{c}w-i-1\\i-1\end{array}\right)
\left(\begin{array}{c}w-i\\j\end{array}\right)
\left(q^{w-i+1-d-j}-1\right).
\end{split}
\end{equation*}
It can be similarly shown that, for odd $l$, the number of codewords in $\mathcal{C}$ with symbol-pair weight $w$ and Type\,-$1$ is
\begin{equation*}
B_{w}^{(3)}=\sum_{i=1}^{\lfloor M_{2}\rfloor}
\sum^{w-i-d}_{j=0}\left(-1\right)^{j}
\left(\begin{array}{c}n-w+i-1\\i-1\end{array}\right)
\left(\begin{array}{c}w-i-1\\i\end{array}\right)
\left(\begin{array}{c}w-i\\j\end{array}\right)
\left(q^{w-i+1-d-j}-1\right)
\end{equation*}
where
\begin{equation*}
M_{2}={\rm min}\left\{\frac{w-1}{2},\,w-d\right\}.
\end{equation*}
\medskip

As a result, the total number of codewords in $\mathcal{C}$ with symbol-pair weight $w$ can be derived:
\begin{equation*}
B_{w}=
\left\{
\begin{array}{ll}
2B_{w}^{(1)}+B_{w}^{(2)}+B_{w}^{(3)}, & d<w<n,\\[2mm]
2B_{w}^{(1)}+B_{w}^{(3)}+A_{n}, & w=n,\\
\end{array}
\right.
\end{equation*}
where
\begin{equation*}
A_{n}=\sum^{n-d}_{j=0}\left(-1\right)^{j}
\left(\begin{array}{c}n\\j\end{array}\right)
\left(q^{n+1-d-j}-1\right).
\end{equation*}
This completes the desired conclusion.
\end{proof}

In the sequel, some examples of Theorem \ref{thmGRS} on MDS codes are presented.

\begin{example}
Let $\mathcal{C}$ be a $\left[4,\,3,\,2\right]_{q}$ MDS code.
Then by Theorem \ref{thmGRS},
\begin{equation*}
B_{0}=1,\,B_{1}=B_{2}=0,\,B_{3}
=2\left(q-1\right)+\left(q-1\right)+\left(q-1\right)=4q-4
\end{equation*}
and
\begin{equation*}
\begin{split}
B_{4}=&\sum_{j=0}^{1}(-1)^{j}
\left(\begin{array}{c}2\\1\end{array}\right)
\left(\begin{array}{c}3\\j\end{array}\right)
\left(q^{2-j}-1\right)
+\sum_{j=0}^{2}(-1)^{j}
\left(\begin{array}{c}4\\j\end{array}\right)
\left(q^{3-j}-1\right)
\\&
+2\sum_{i=1}^{2}\sum_{j=0}^{2-i}\left(-1\right)^{j}
\left(\begin{array}{c}3-i\\i-1\end{array}\right)
\left(\begin{array}{c}4-i\\j\end{array}\right)
\left(q^{3-i-j}-1\right)
\\
=&\,q^{3}-4q+3.
\end{split}
\end{equation*}
For the case of $q=8$, let $\mathbf{a}=(1,\theta,\,\theta^2,\,\theta^3)$
where $\theta$ is a primitive element of $\mathbb{F}_{q}^{*}$ with $\theta^3+\theta+1=0$.
By computation software MAGMA, the code $\mathbf{RS}_{3}\left(\mathbf{a}\right)$ has symbol-pair weight distribution:
\begin{equation*}
B_{0}=1,\,B_{1}=B_{2}=0,\,B_{3}=28,\,B_{4}=483,
\end{equation*}
which coincides with the result in Theorem \ref{thmGRS}.
\end{example}

\begin{example}
Let $\mathcal{C}$ be a $[5,\,4,\,2]_{q}$ MDS code.
Then Theorem \ref{thmGRS} yields that
\begin{equation*}
B_{0}=1,\,B_{1}=B_{2}=0,\,B_{3}=2(q-1)+2(q-1)+(q-1)=5q-5,
\end{equation*}
\begin{equation*}
\begin{split}
B_{4}=&\sum_{j=0}^{1}(-1)^{j}
\left(\begin{array}{c}2\\1\end{array}\right)
\left(\begin{array}{c}3\\j\end{array}\right)
\left(q^{2-j}-1\right)
+\sum_{i=1}^{2}\sum_{j=0}^{2-i}(-1)^{j}
\left(\begin{array}{c}3-i\\i-1\end{array}\right)
\left(\begin{array}{c}4-i\\j\end{array}\right)
\left(q^{3-i-j}-1\right)
\\&
+2\sum_{i=1}^{2}\sum_{j=0}^{2-i}(-1)^{j}
\left(\begin{array}{c}i\\i-1\end{array}\right)
\left(\begin{array}{c}3-i\\i-1\end{array}\right)
\left(\begin{array}{c}4-i\\j\end{array}\right)
\left(q^{3-i-j}-1\right)
\\
=&\,5q^{2}-10q+5
\end{split}
\end{equation*}
and
\begin{equation*}
\begin{split}
B_{5}=&\sum_{j=0}^{3}(-1)^{j}
\left(\begin{array}{c}5\\j\end{array}\right)
\left(q^{4-j}-1\right)
+\sum_{i=1}^{2}\sum_{j=0}^{3-i}(-1)^{j}
\left(\begin{array}{c}4-i\\i\end{array}\right)
\left(\begin{array}{c}5-i\\j\end{array}\right)
\left(q^{4-i-j}-1\right)
\\&
+2\sum_{i=1}^{2}\sum_{j=0}^{3-i}(-1)^{j}
\left(\begin{array}{c}4-i\\i-1\end{array}\right)
\left(\begin{array}{c}5-i\\j\end{array}\right)
\left(q^{4-i-j}-1\right)
\\
=&\,q^{4}-5q^{2}+5q-1.
\end{split}
\end{equation*}
For the case of $q=27$, let $\mathbf{a}=(1,\,\theta,\,\theta^2,\,\theta^3,\,\theta^4)$
where $\theta$ is a primitive element of $\mathbb{F}_{q}^{*}$ with $\theta^3-\theta+1=0$.
It can be verified by MAGMA that the code $\mathbf{RS}_{4}\left(\mathbf{a}\right)$ has symbol-pair weight distribution:
\begin{equation*}
B_{0}=1,\,B_{1}=B_{2}=0,\,B_{3}=130,\,B_{4}=3380,\,B_{5}=527930,
\end{equation*}
which coincides with the result in Theorem \ref{thmGRS}.
\end{example}

\section{Symbol $b$-weight distribution of simplex codes}

In this section, we shall analyze symbol $b$-weight distribution of simplex codes for any $b$.
We show that all the nonzero codewords in a simplex code have the same symbol $b$-weight.
Surprisingly,  rearranging coordinates of a simplex code may lead to different symbol $b$-weight.
From now on, we always set $h=\frac{q-1}{p-1}$ and $g$ a primitive element of $\mathbb{F}_{q}^{*}$.
For brevity, denote by
\begin{equation*}
\mathcal{C}=\left\{\mathbf{c}_{\alpha}=\left({\rm Tr}\left(\alpha\right),{\rm Tr}\left(g\alpha\right),\cdots,{\rm Tr}\left(g^{q-2}\alpha\right)\right)
\,\big|\,\alpha\in\mathbb{F}_{q}\right\}
\end{equation*}
the cyclic simplex code of length $q-1$ and
\begin{equation}\label{eqvariationsc}
\mathcal{C^{'}}=\left\{\mathbf{c}_{\alpha}^{'}=
\left(\overline{\mathbf{c}}_{\alpha}^{\left(0\right)}\,\big|\,
\overline{\mathbf{c}}_{\alpha}^{\left(1\right)}\,\big|\,\cdots\,\big|\,
\overline{\mathbf{c}}_{\alpha}^{\left(h-1\right)}\right)
\,\Big|\,\alpha\in\mathbb{F}_{q}\right\}
\end{equation}
the variation simplex code, where
\begin{equation}\label{calphai}
\overline{\mathbf{c}}_{\alpha}^{\left(i\right)}=\left({\rm Tr}\left(g^{i}\alpha\right),{\rm Tr}\left(g^{i+h}\alpha\right),\cdots,{\rm Tr}\left(g^{i+(p-2)h}\alpha\right)\right).
\end{equation}

Firstly, the symbol $b$-weight distribution of cyclic simplex code is given as follows.

\begin{thm}\label{thmb}
Let $q=p^m$.  Then the cyclic simplex code $\mathcal{C}$ has only one nonzero symbol $b$-weight
\begin{equation*}
\left\{
\begin{array}{ll}
q-p^{m-b}, & b<m,\\[2mm]
q-1, & m\leq b\leq q-1.\\
\end{array}
\right.
\end{equation*}
\end{thm}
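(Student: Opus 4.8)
The plan is to compute the symbol $b$-weight of an arbitrary nonzero codeword $\mathbf{c}_\alpha$ directly, and to observe that the answer is independent of $\alpha\in\mathbb{F}_q^*$. Recall that $w_b(\mathbf{c}_\alpha)$ counts the indices $i\in\mathbb{Z}_{q-1}$ for which the length-$b$ window $({\rm Tr}(g^i\alpha),{\rm Tr}(g^{i+1}\alpha),\dots,{\rm Tr}(g^{i+b-1}\alpha))$ is \emph{not} the zero vector. Equivalently, I would count the \emph{complement}: the number of starting positions $i$ at which $b$ consecutive trace values all vanish, and subtract from $q-1$. So the first step is to reduce everything to understanding the zero pattern of the single periodic sequence $(s_i)_{i\in\mathbb{Z}_{q-1}}$ with $s_i={\rm Tr}(g^i\alpha)$. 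Since $\alpha\neq 0$ and $g$ is primitive, $\{g^i\alpha : i\in\mathbb{Z}_{q-1}\}=\mathbb{F}_q^*$ runs over all nonzero field elements exactly once, so the multiset of values $\{s_i\}$ does not depend on $\alpha$ — this is already the key symmetry that forces a single nonzero weight.

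\textbf{Counting zero-windows via the kernel of the trace.}
The index $i$ fails to contribute to $w_b$ precisely when ${\rm Tr}(g^i\alpha)={\rm Tr}(g^{i+1}\alpha)=\cdots={\rm Tr}(g^{i+b-1}\alpha)=0$. Writing $\beta=g^i\alpha\in\mathbb{F}_q^*$, this says $\beta,g\beta,\dots,g^{b-1}\beta$ all lie in $\ker{\rm Tr}$, i.e. $\beta$ lies in the intersection $K_b=\bigcap_{j=0}^{b-1} g^{-j}\,\ker{\rm Tr}$. Because $\ker{\rm Tr}$ is an $\mathbb{F}_p$-hyperplane of dimension $m-1$, each condition ${\rm Tr}(g^j\beta)=0$ cuts down by at most one $\mathbb{F}_p$-dimension, and the substantive claim is that for $b<m$ these $b$ linear functionals $\beta\mapsto{\rm Tr}(g^j\beta)$ are $\mathbb{F}_p$-linearly independent, so $K_b$ has dimension $m-b$ and hence $|K_b|=p^{m-b}$. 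The number of bad starting positions is then $|K_b\cap\mathbb{F}_q^*|=|K_b\setminus\{0\}|=p^{m-b}-1$ (the zero element is excluded since $\beta=g^i\alpha\neq 0$), giving $w_b(\mathbf{c}_\alpha)=(q-1)-(p^{m-b}-1)=q-p^{m-b}$. For $m\le b\le q-1$ the same independence argument forces $K_b=\{0\}$, so there are no bad positions and $w_b=q-1$.

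\textbf{The crux: linear independence of the shifted trace functionals.}
The main obstacle is proving that $\{\beta\mapsto{\rm Tr}(g^j\beta):0\le j\le b-1\}$ is $\mathbb{F}_p$-linearly independent whenever $b\le m$. I would argue this through the standard nondegeneracy of the trace form: a relation $\sum_{j=0}^{b-1}\lambda_j\,{\rm Tr}(g^j\beta)=0$ for all $\beta$, with $\lambda_j\in\mathbb{F}_p$, rewrites as ${\rm Tr}\!\bigl((\sum_j\lambda_j g^j)\beta\bigr)=0$ for all $\beta\in\mathbb{F}_q$, which by nondegeneracy of the trace pairing forces $\sum_{j=0}^{b-1}\lambda_j g^j=0$. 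But $g$ is a primitive element of $\mathbb{F}_q=\mathbb{F}_{p^m}$, hence has degree $m$ over $\mathbb{F}_p$, so $1,g,\dots,g^{m-1}$ are $\mathbb{F}_p$-linearly independent; since $b\le m$ the only solution is $\lambda_0=\cdots=\lambda_{b-1}=0$. This pins down $\dim_{\mathbb{F}_p}K_b=m-b$ for $b<m$ and $K_b=\{0\}$ for $b\ge m$, and the two displayed cases in the theorem follow immediately. The only care needed is in the wrap-around indices near $i=q-2$, but since the functionals depend only on the field elements $g^i\alpha$ and the index set $\mathbb{Z}_{q-1}$ is cyclic, the count $|K_b\setminus\{0\}|$ is genuinely position-independent and the cyclic boundary causes no exception.
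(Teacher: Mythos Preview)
Your argument is correct and is essentially the same as the paper's: both count the complement (indices $i$ where $b$ consecutive trace values vanish), substitute $\beta=g^{i}\alpha$ to turn this into $|K_b\setminus\{0\}|$ with $K_b=\bigcap_{j=0}^{b-1}\ker\bigl(\beta\mapsto{\rm Tr}(g^{j}\beta)\bigr)$, and use the $\mathbb{F}_p$-linear independence of $1,g,\dots,g^{b-1}$ to determine $\dim K_b$. If anything, your treatment of the independence step (via nondegeneracy of the trace pairing and the fact that a primitive element has degree $m$) is more explicit than the paper's, which simply asserts that $g^{i},\dots,g^{i+b-1}$ are $\mathbb{F}_p$-independent.
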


\begin{proof}
Let
\begin{equation*}
L_{i}=\left\{x\in\mathbb{F}_{q}\,\big|\,{\rm Tr}\left(g^{i-1}x\right)=0\right\}
\end{equation*}
where $1\leq i\leq b$.
Note that $L_{i}$ is an $\mathbb{F}_{p}$-linear subspace in $\mathbb{F}_q$ with codimension $1$.
It follows that the symbol $b$-weight of any nonzero codeword $\mathbf{c}_{\alpha}=\left({\rm Tr}\left(\alpha\right),{\rm Tr}\left(g\alpha\right),\cdots,{\rm Tr}\left(g^{q-2}\alpha\right)\right)$ in $\mathcal{C}$ is
\begin{equation*}
\begin{split}
w_{b}\left(\mathbf{c}_{\alpha}\right)&=q-1-\#\left\{0\leq i\leq q-2\,\big|\,{\rm Tr}(g^{i}\alpha)={\rm Tr}(g^{i+1}\alpha)=\cdots={\rm Tr}(g^{i+b-1}\alpha)=0\right\}
\\[1mm]&
=q-1-\#\left\{x\in\mathbb{F}_{q}^{*}\,\big|\,x\in L_{1}\cap L_{2}\cap \cdots \cap L_{b} \right\}
\\[2mm]&
=\left\{
\begin{array}{ll}
q-1-\left(p^{m-b}-1\right), & b<m,\\[2mm]
q-1-\left(1-1\right), & m\leq b\leq q-1.\\
\end{array}
\right.
\\[2mm]&
=\left\{
\begin{array}{ll}
q-p^{m-b}, & b<m,\\[2mm]
q-1, & m\leq b\leq q-1,\\
\end{array}
\right.
\end{split}
\end{equation*}
Here the third equality follows from that for $b<m$, the $\mathbb{F}_{p}$-linear system of equations
\begin{equation}\label{eqlinear}
\left\{
\begin{array}{l}
{\rm Tr}(g^{i}\alpha)=0,\\
\vdots\\
{\rm Tr}(g^{i+b-1}\alpha)=0
\end{array}
\right.
\end{equation}
has $p^{m-b}$ solutions since $g^{i},\,\cdots,g^{i+b-1}$ are linear independent over $\mathbb{F}_{p}$.
If $m\leq b\leq q-1$, the equation (\ref{eqlinear}) has only one solution $\alpha=0$.
The proof is completed.
\end{proof}

\begin{example}
Let $p=2$, $m=3$ and $g$ be a primitive element of $\mathbb{F}_{q}^{*}$ with $g^3+g^2+1=0$.
For the simplex code
\begin{equation*}
\mathcal{C}=\left\{\mathbf{c}_{\alpha}=\left({\rm Tr}\left(\alpha\right),{\rm Tr}\left(g\alpha\right),{\rm Tr}\left(g^2\alpha\right),{\rm Tr}\left(g^3\alpha\right),{\rm Tr}\left(g^4\alpha\right),{\rm Tr}\left(g^5\alpha \right),{\rm Tr}\left(g^6\alpha\right)\right)
\,\big|\,\alpha\in\mathbb{F}_{q}\right\},
\end{equation*}
we choose $\mathbf{c}_{g}=\left(1,1,0,1,0,0,1\right)$.
It can be checked by Magma that
\begin{equation*}
w_{1}\left(\mathbf{c}_{g}\right)=4,\,w_{2}\left(\mathbf{c}_{g}\right)=6,\,
w_{b}\left(\mathbf{c}_{g}\right)=7\;for\;3\leq b\leq 7,
\end{equation*}
which coincides with Theorem \ref{thmb}.
\end{example}

\begin{remark}\label{thms}
Any nonzero codeword of the cyclic simplex code has symbol-pair weight $q-p^{m-2}$, which follows directly from Theorem \ref{thmb} by taking $b=2$.
\end{remark}

The following corollary presents symbol $b$-weight distribution of a standard simplex code, which is useful in the sequel.

\begin{cor}\label{cor1}
Let $q=p^m$ and
\begin{equation*}
\mathcal{C}^{''}=\left\{\mathbf{c}_{\alpha}^{''}=\left({\rm Tr}\left(\alpha\right),\,{\rm Tr}\left(g\alpha\right),\cdots,{\rm Tr}\left(g^{\frac{q-1}{p-1}-1}\alpha\right)\right)
\,\Big|\,\alpha\in\mathbb{F}_{q}\right\}
\end{equation*}
be a standard simplex code of length $\frac{q-1}{p-1}$.
Then all nonzero codewords $\mathbf{c}_{\alpha}^{''}$ in $\mathcal{C}^{''}$ have symbol $b$-weight
\begin{equation*}
\left\{
\begin{array}{ll}
\frac{q-p^{m-b}}{p-1}, & b<m,\\[2mm]
\frac{q-1}{p-1}, & m\leq b\leq q-1.\\
\end{array}
\right.
\end{equation*}
\end{cor}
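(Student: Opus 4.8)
The plan is to run the same argument as in the proof of Theorem \ref{thmb}, but over a transversal of $\mathbb{F}_q^*/\mathbb{F}_p^*$ instead of over all of $\mathbb{F}_q^*$. Write $\mathbf{c}_{\alpha}^{''}=\left(c_{0},\ldots,c_{h-1}\right)$ with $c_{k}={\rm Tr}\left(g^{k}\alpha\right)$ and $h=\frac{q-1}{p-1}$. The first thing I would record is the quasi-periodicity of the trace sequence: since $g^{h}$ has order $p-1$ it lies in $\mathbb{F}_{p}^{*}$, and the $\mathbb{F}_{p}$-linearity of ${\rm Tr}$ gives ${\rm Tr}\left(g^{k+h}\alpha\right)=g^{h}\,{\rm Tr}\left(g^{k}\alpha\right)$ for every integer $k$. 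In particular ${\rm Tr}\left(g^{k\bmod h}\alpha\right)$ and ${\rm Tr}\left(g^{k}\alpha\right)$ differ only by a nonzero scalar from $\mathbb{F}_{p}^{*}$, hence vanish simultaneously.

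Using this, I would show that the cyclic $b$-window of $\mathbf{c}_{\alpha}^{''}$ starting at position $i$ is the zero tuple if and only if ${\rm Tr}\left(g^{i+j}\alpha\right)=0$ for all $0\leq j\leq b-1$, where now $g^{i},g^{i+1},\ldots,g^{i+b-1}$ are read as genuine consecutive powers (the wrap-around of the index only rescales the affected entries by elements of $\mathbb{F}_{p}^{*}$, which does not change whether an entry is zero). Consequently
\begin{equation*}
w_{b}\left(\mathbf{c}_{\alpha}^{''}\right)=h-\#\left\{\,i\in\mathbb{Z}_{h}\,\big|\,{\rm Tr}\left(g^{i}\alpha\right)=\cdots={\rm Tr}\left(g^{i+b-1}\alpha\right)=0\,\right\}.
\end{equation*}
Next I would substitute $x=g^{i}\alpha$. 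As $i$ runs over $\left\{0,1,\ldots,h-1\right\}$ and $\alpha\neq 0$ is fixed, the elements $g^{i}\alpha$ form a complete set of coset representatives of $\mathbb{F}_{q}^{*}/\mathbb{F}_{p}^{*}$, because $\left\langle g^{h}\right\rangle=\mathbb{F}_{p}^{*}$ has index $h$ in $\mathbb{F}_{q}^{*}$. Under this substitution the condition ${\rm Tr}\left(g^{i+j}\alpha\right)=0$ becomes ${\rm Tr}\left(g^{j}x\right)=0$, that is $x\in L_{1}\cap\cdots\cap L_{b}$ with the subspaces $L_{i}$ exactly as in the proof of Theorem \ref{thmb}.

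Since $L_{1}\cap\cdots\cap L_{b}$ is an $\mathbb{F}_{p}$-subspace, its nonzero part is a union of full $\mathbb{F}_{p}^{*}$-cosets, so the number of representatives $g^{i}\alpha$ lying in it equals $\frac{\left|L_{1}\cap\cdots\cap L_{b}\right|-1}{p-1}$. Inserting the two values $\left|L_{1}\cap\cdots\cap L_{b}\right|=p^{m-b}$ (for $b<m$) and $=1$ (for $m\leq b\leq q-1$) already established in Theorem \ref{thmb} gives the counts $\frac{p^{m-b}-1}{p-1}$ and $0$ respectively, and subtracting from $h=\frac{q-1}{p-1}$ yields the claimed weights. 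The hard part will be the bookkeeping around the cyclic wrap-around of the length-$h$ code: one must check that passing to indices modulo $h$ never alters the zero/nonzero pattern of a $b$-window (this is precisely what the quasi-periodicity ${\rm Tr}\left(g^{k+h}\alpha\right)=g^{h}\,{\rm Tr}\left(g^{k}\alpha\right)$ guarantees, even when $b>h$ so that indices repeat), together with verifying that the $h$ elements $g^{i}\alpha$ meet each $\mathbb{F}_{p}^{*}$-coset exactly once so that the coset-counting step is valid. Once these are in place, the computation reduces verbatim to the linear-algebra count carried out in Theorem \ref{thmb}.
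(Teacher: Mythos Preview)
Your argument is correct, but it takes a genuinely different route from the paper. The paper observes that with $u=g^{h}\in\mathbb{F}_{p}^{*}$ the full-length cyclic codeword decomposes as
\[
\mathbf{c}_{\alpha}=\left(\mathbf{c}_{\alpha}^{''}\,\big|\,u\mathbf{c}_{\alpha}^{''}\,\big|\,\cdots\,\big|\,u^{p-2}\mathbf{c}_{\alpha}^{''}\right),
\]
so that the zero/nonzero pattern of $\mathbf{c}_{\alpha}$ is $h$-periodic and $w_{b}\left(\mathbf{c}_{\alpha}\right)=(p-1)\,w_{b}\left(\mathbf{c}_{\alpha}^{''}\right)$; dividing the value already computed in Theorem~\ref{thmb} by $p-1$ then finishes the proof in one line. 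You instead rerun the subspace count of Theorem~\ref{thmb} directly on the short code, using that $\left\{g^{i}\alpha:0\le i<h\right\}$ is a transversal of $\mathbb{F}_{q}^{*}/\mathbb{F}_{p}^{*}$ and that $\left(L_{1}\cap\cdots\cap L_{b}\right)\setminus\{0\}$ is a union of full $\mathbb{F}_{p}^{*}$-cosets, yielding the factor $\frac{1}{p-1}$ by coset-counting rather than by concatenation. The paper's reduction is shorter and exploits the cyclic code $\mathcal{C}$ as a black box; your approach is a bit longer but self-contained (it never leaves the length-$h$ code) and makes the role of the $\mathbb{F}_{p}^{*}$-action more transparent. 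The quasi-periodicity identity you isolate, ${\rm Tr}\left(g^{k+h}\alpha\right)=g^{h}\,{\rm Tr}\left(g^{k}\alpha\right)$, is exactly what underlies both arguments.
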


\begin{proof}
Let $u=g^{\frac{q-1}{p-1}}$.
Note that for any nonzero $\alpha\in \mathbb{F}_{q}$,
\begin{equation*}
\begin{split}
\mathbf{c}_{\alpha}&=\left(\mathbf{c}_{\alpha}^{''}\,\big|\,
\mathbf{c}_{u\alpha}^{''}\,\big|\,\cdots\,
\big|\,\mathbf{c}_{u^{p-2}\alpha}^{''}\right)
\\[2mm] &
=\left(\mathbf{c}_{\alpha}^{''}\,\big|\,u\mathbf{c}_{\alpha}^{''}
\,\big|\,\cdots\,\big|\,u^{p-2}\mathbf{c}_{\alpha}^{''}\right).
\end{split}
\end{equation*}
It follows that
\begin{equation*}
w_{b}\left(\mathbf{c}_{\alpha}\right)=\left(p-1\right)
w_{b}\left(\mathbf{c}_{\alpha}^{''}\right).
\end{equation*}
Hence
\begin{equation*}
w_{b}\left(\mathbf{c}_{\alpha}^{''}\right)=
\frac{1}{p-1}w_{b}\left(\mathbf{c}_{\alpha}\right).
\end{equation*}
The desired conclusion follows from Theorem \ref{thmb}.
\end{proof}

In the sequel, we will calculate the symbol-pair weight distribution of variation simplex codes.

\begin{thm}\label{thmvsp}
Let $q=p^m$.
Then the symbol-pair weight of any nonzero codeword of $\mathcal{C^{'}}$ in (\ref{eqvariationsc}) is $q-p^{m-1}+p^{m-2}$.
\end{thm}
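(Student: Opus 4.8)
The plan is to reduce everything to the symbol-weight data of the standard simplex code $\mathcal{C}''$ already computed in Corollary \ref{cor1}, by exploiting the rigid block structure of $\mathcal{C}'$. Set $u=g^{h}$ with $h=\frac{q-1}{p-1}$; then $u^{p-1}=g^{q-1}=1$ and $u$ has order exactly $p-1$, so $\{1,u,\dots,u^{p-2}\}=\mathbb{F}_{p}^{*}$. Since ${\rm Tr}$ is $\mathbb{F}_{p}$-linear and each $u^{j}\in\mathbb{F}_{p}$, the defining block (\ref{calphai}) simplifies to
\begin{equation*}
\overline{\mathbf{c}}_{\alpha}^{(i)}={\rm Tr}(g^{i}\alpha)\cdot\bigl(1,u,\dots,u^{p-2}\bigr),
\end{equation*}
so each block is either the all-zero vector (when ${\rm Tr}(g^{i}\alpha)=0$) or a vector all of whose $p-1$ entries are nonzero (when ${\rm Tr}(g^{i}\alpha)\neq0$). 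This all-zero/all-nonzero dichotomy is the engine of the proof, and I would establish it first.

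Next I would write $w_{2}(\mathbf{c}_{\alpha}')=(q-1)-N$, where $N$ is the number of cyclically consecutive coordinate pairs of $\mathbf{c}_{\alpha}'$ that are both zero. I would split these $q-1$ pairs into the \emph{within-block} pairs (the $p-2$ adjacent pairs inside each length-$(p-1)$ block) and the \emph{between-block} pairs (the $h$ pairs joining the last entry of one block to the first entry of the next, taken cyclically). By the dichotomy above, a within-block pair fails to be both-zero exactly when its block is nonzero, contributing $p-2$ per nonzero block, while a between-block pair fails to be both-zero exactly when at least one of its two adjacent blocks is nonzero.

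The decisive point is that block $i$ is zero if and only if the $i$-th coordinate ${\rm Tr}(g^{i}\alpha)$ of the standard simplex codeword $\mathbf{c}_{\alpha}''$ of Corollary \ref{cor1} vanishes; in particular the number of nonzero blocks equals $w_{1}(\mathbf{c}_{\alpha}'')$. Moreover --- and this is the step I expect to be the main obstacle to state carefully --- the cyclic adjacency of the blocks (including the wrap from block $h-1$ back to block $0$) matches exactly the cyclic adjacency of $\mathbf{c}_{\alpha}''$, so the number of between-block pairs that are \emph{not} both-zero is precisely the cyclic symbol-pair weight $w_{2}(\mathbf{c}_{\alpha}'')$. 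Counting the not-both-zero pairs of both kinds then yields
\begin{equation*}
w_{2}(\mathbf{c}_{\alpha}')=(p-2)\,w_{1}(\mathbf{c}_{\alpha}'')+w_{2}(\mathbf{c}_{\alpha}''),
\end{equation*}
and substituting $w_{1}(\mathbf{c}_{\alpha}'')=p^{m-1}$ and $w_{2}(\mathbf{c}_{\alpha}'')=\frac{q-p^{m-2}}{p-1}$ from Corollary \ref{cor1} (taking $m\geq2$) and simplifying gives $q-p^{m-1}+p^{m-2}$, independent of the nonzero $\alpha$. The only care needed is the between-block bookkeeping: one must verify that the wrap-around pair is the one inherited from $\mathbf{c}_{\alpha}''$, so that this count is genuinely a symbol-pair weight rather than an $\alpha$-dependent quantity. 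Once the block dichotomy is in hand, the remainder is a short arithmetic check.
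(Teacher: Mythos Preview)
Your proposal is correct and follows essentially the same approach as the paper: both arguments establish the all-zero/all-nonzero block dichotomy via ${\rm Tr}(g^{i+hj}\alpha)=u^{j}{\rm Tr}(g^{i}\alpha)$, and both arrive at the identity $w_{2}(\mathbf{c}_{\alpha}')=(p-2)\,w_{H}(\mathbf{c}_{\alpha}'')+w_{2}(\mathbf{c}_{\alpha}'')$ before substituting $w_{H}(\mathbf{c}_{\alpha}'')=p^{m-1}$ and $w_{2}(\mathbf{c}_{\alpha}'')=p^{m-1}+p^{m-2}$. The only cosmetic difference is that you organise the count via the explicit within-block/between-block pair decomposition, whereas the paper phrases the same bookkeeping in terms of the ``Shape'' language from the proof of Theorem~\ref{thmGRS}; your formulation is arguably cleaner and makes the wrap-around verification more transparent.
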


\begin{proof}
Now for
\begin{equation*}
\mathbf{c}_{\alpha}^{''}=\left({\rm Tr}(\alpha),\,{\rm Tr}(g\alpha),\cdots,{\rm Tr}\left(g^{h-1}\alpha\right)\right),
\end{equation*}
we claim that
\begin{equation*}
w_{H}\left(\mathbf{c}_{\alpha}^{''}\right)=p^{m-1}.
\end{equation*}
Indeed, the number of $x\in \mathbb{F}_{q}^{*}$ with ${\rm Tr}(x)=0$ is  $p^{m-1}-1$.
It can be verified that the number of zero entries in $\mathbf{c}_{\alpha}^{''}$ is $\frac{p^{m-1}-1}{p-1}$.
Then one can obtain that
\begin{equation*}
w_{H}\left(\mathbf{c}_{\alpha}^{''}\right)
=\frac{q-1}{p-1}-\frac{p^{m-1}-1}{p-1}=p^{m-1}.
\end{equation*}
It follows from Corollary \ref{cor1} that
\begin{equation*}
w_{2}\left(\mathbf{c}_{\alpha}^{''}\right)=p^{m-1}+p^{m-2}.
\end{equation*}
On the other hand, it is worth noting that
\begin{equation*}
{\rm Tr}\left(g^{i+hj}\alpha\right)=g^{hj}{\rm Tr}\left(g^{i}\alpha\right)
\end{equation*}
for any $1\leq j\leq p-2$.
Hence for any $0\leq i\leq h-1$, all the entries of  $\overline{\mathbf{c}}_{\alpha}^{\left(i\right)}$ in (\ref{calphai}) are zero (or nonzero).
Recall the definition of Shape in the proof of Theorem \ref{thmGRS}.
Let
\begin{equation*}
\mathbf{c}_{\alpha}^{''}=\left({\rm Tr}(\alpha),\,{\rm Tr}(g\alpha),\cdots,{\rm Tr}\left(g^{h-1}\alpha\right)\right)
\end{equation*}
 be a codeword of Shape $(N_1,N_2,\cdots, N_l)$ with $|N_i|=n_i$ and $\sum\limits_{i=1}^l=h$.
 Note that
 \begin{itemize}
   \item for any $N_i$ such that ${\rm Tr}\left(g^j\alpha\right)$ is always nonzero for any $j\in N_i$, concatenating $\left(p-2\right)$ nonzero elements  at the end of each ${\rm Tr}\left(g^j\alpha\right)$ will result in the augment of its original symbol-pair weight by $(p-2)$;
   \item for any $N_i$ such that ${\rm Tr}\left(g^j\alpha\right)$ is always zero for any $j\in N_i$, concatenating $\left(p-2\right)$ zeros at the end of each ${\rm Tr}\left(g^j\alpha\right)$ will keep its original symbol-pair weight.
 \end{itemize}
Therefore, for any nonzero codeword
\begin{equation*}
\mathbf{c}_{\alpha}^{'}=\left(\overline{\mathbf{c}}_{\alpha}^{\left(0\right)}
\,\big|\,\overline{\mathbf{c}}_{\alpha}^{\left(1\right)}\,\big|\,\cdots\,\big|\,
\overline{\mathbf{c}}_{\alpha}^{\left(h-1\right)}\right),
\end{equation*}
one yields
\begin{equation*}
\begin{split}
w_{2}\left(\mathbf{c}_{\alpha}^{'}\right)&=
w_{2}\left(\mathbf{c}_{\alpha}^{''}\right)+
(p-2)w_{H}\left(\mathbf{c}_{\alpha}^{''}\right)
\\[1mm]&
=p^{m-1}+p^{m-2}+\left(p-2\right)p^{m-1}
\\[1mm]&
=q-p^{m-1}+p^{m-2}.
\end{split}
\end{equation*}
Thus we complete the proof.
\end{proof}

The following example illustrates that rearranging the coordinates of simplex codes may induce different symbol-pair weight from Theorems \ref{thmb} and \ref{thmvsp}.

\begin{example}\label{eg1}
Let $p=m=3$ and $g$ be a primitive element of $\mathbb{F}_{q}^{*}$ with $g^3-g+1=0$.
Then for the codewords
\begin{equation*}
\mathbf{c}_{g}=\left(0,-1,0,-1,1,-1,-1,1,0,-1,-1,
-1,0,0,1,0,1,-1,1,1,-1,0,1,1,1,0\right)\in \mathcal{C}
\end{equation*}
and
\begin{equation*}
\mathbf{c}_{g}^{'}=\left(0,0,-1,1,0,0,-1,1,1,-1,-1,1,-1,
1,1,-1,0,0,-1,1,-1,1,-1,1,0,0\right)\in \mathcal{C}^{'},
\end{equation*}
the symbol-pair weight of $\mathbf{c}_{g}$ and $\mathbf{c}_{g}^{'}$ are $24$ (\,see Theorem \ref{thmb}) and $21$ (\,see Theorem \ref{thmvsp}), respectively.
Therefore, rearrangement of coordinates in a simplex code may change the symbol-pair weight.
\end{example}

In what follows, for odd $b$ with $3\leq b\leq m$, the symbol $b$-weight of nonzero codewords in variation simplex codes over $\mathbb{F}_{3}$ is investigated.

\begin{prop}\label{prop33}
Let $q=3^m$.
{Then the symbol $(2s+1)$-weight of any nonzero codewords $\mathbf{c}_{\alpha}^{'}$ in the variation simplex code $\mathcal{C}^{'}$ is}
\begin{equation*}
\left\{
\begin{array}{ll}
q-p^{m-s-1}, & 1\leq s<m,\\[2mm]
q-1, & m\leq s\leq \lfloor \frac{q-2}{2}\rfloor.\\
\end{array}
\right.
\end{equation*}
\end{prop}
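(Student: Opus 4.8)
The plan is to compute the symbol $(2s+1)$-weight of a nonzero codeword $\mathbf{c}_{\alpha}^{'}$ in the variation simplex code $\mathcal{C}^{'}$, which is a concatenation of the $h=\frac{q-1}{p-1}$ blocks $\overline{\mathbf{c}}_{\alpha}^{(i)}$, each of length $p-1=2$ (since $p=3$). The key structural fact, established in the proof of Theorem \ref{thmvsp}, is that within each block $\overline{\mathbf{c}}_{\alpha}^{(i)}$, all entries are simultaneously zero or simultaneously nonzero, because ${\rm Tr}(g^{i+hj}\alpha)=g^{hj}{\rm Tr}(g^{i}\alpha)$. Thus the full codeword $\mathbf{c}_{\alpha}^{'}$ of length $q-1$ decomposes into $h$ consecutive pairs, each pair being either $(0,0)$ or $(\ast,\ast)$ with both entries nonzero. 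The symbol $(2s+1)$-weight counts the number of starting positions $i$ for which the window $(c_{i},\ldots,c_{i+2s})$ of length $2s+1$ is not entirely zero.

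**Reducing to a block-level counting problem:**

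First I would reformulate the weight in terms of the underlying standard simplex code $\mathcal{C}^{''}$ of length $h$. A window of length $2s+1$ in $\mathbf{c}_{\alpha}^{'}$ is all-zero precisely when every block it touches is an all-zero block. A window of length $2s+1$ starting at position $i$ spans anywhere from $s$ to $s+1$ full blocks depending on the parity of $i$ relative to the block boundaries. The cleaner approach is to count, as in Theorem \ref{thmb}, the complementary quantity: $w_{2s+1}(\mathbf{c}_{\alpha}^{'}) = (q-1) - Z$, where $Z$ is the number of positions $i$ whose length-$(2s+1)$ window is entirely zero. The condition that a window is entirely zero translates, via the pairing structure, into a condition on a run of consecutive zero-blocks in $\mathbf{c}_{\alpha}^{''}$; equivalently, it becomes the $\mathbb{F}_p$-linear condition ${\rm Tr}(g^{k}\alpha)=\cdots={\rm Tr}(g^{k+s-1}\alpha)=0$ (possibly with a boundary-window refinement) on $s$ consecutive traces. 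Since $g^{k},\ldots,g^{k+s-1}$ are linearly independent over $\mathbb{F}_p$ when $s<m$, the solution count mirrors the $p^{m-s}$ versus $1$ dichotomy appearing in Theorem \ref{thmb}.

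**Carrying out the count and identifying the obstacle:**

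The main work is to set up the correspondence between zero windows of odd length $2s+1$ at the symbol level and runs of $s$ consecutive zero-blocks at the block level, being careful with the two window alignments: a window can begin at the first or the second coordinate of a block. I would count separately the positions $i$ that are "block-aligned" and those that are "offset by one," show that each contributes a run-of-zero-blocks condition equivalent to an $s$-dimensional linear system in $\alpha$, and then invoke the independence of consecutive powers $g^{k}, g^{k+1}, \ldots$ over $\mathbb{F}_p$ exactly as in the proof of Theorem \ref{thmb}. Summing over all $h$ block positions and both alignments should yield $Z = p^{m-s-1}$ for $s<m$ and $Z=1$ (only $\alpha=0$, excluded) for $s\ge m$, giving the stated values $q-p^{m-s-1}$ and $q-1$. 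The hard part will be the careful bookkeeping at the block boundaries: because the window length $2s+1$ is odd while blocks have even length $2$, each window straddles boundaries unevenly, and I must verify that the two alignment cases do not double-count and that their combined contribution collapses to the single clean power of $p$. Once the boundary analysis is pinned down, the rest follows directly from the linear-algebra argument of Theorem \ref{thmb} applied to the standard simplex code $\mathcal{C}^{''}$.
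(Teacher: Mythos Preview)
Your approach is the same as the paper's, but you have an off-by-one slip, and the boundary bookkeeping you flag as ``the hard part'' is in fact the easy observation that makes the whole argument work. A window of length $2s+1$ starting at symbol position $2i$ covers positions $2i,\ldots,2i+2s$ and hence touches blocks $i,i+1,\ldots,i+s$; a window starting at position $2i+1$ covers positions $2i+1,\ldots,2i+2s+1$ and touches the \emph{same} blocks $i,\ldots,i+s$. So both alignments impose the identical condition ${\rm Tr}(g^{i}\alpha)=\cdots={\rm Tr}(g^{i+s}\alpha)=0$ on $s+1$ (not $s$) consecutive traces, and there is no uneven straddling or double-counting to resolve: the number of all-zero windows is simply
\[
Z \;=\; 2\cdot\#\bigl\{\,0\le i\le h-1 : {\rm Tr}(g^{i}\alpha)=\cdots={\rm Tr}(g^{i+s}\alpha)=0\,\bigr\}
\;=\; 2\bigl(h-w_{s+1}(\mathbf{c}_{\alpha}^{''})\bigr).
\]
From here the paper does not redo the linear-algebra count; it just rewrites $w_{2s+1}(\mathbf{c}_{\alpha}^{'})=(p-1)\,w_{s+1}(\mathbf{c}_{\alpha}^{''})=w_{s+1}(\mathbf{c}_{\alpha})$ using Corollary~\ref{cor1} and reads off the value from Theorem~\ref{thmb} with $b=s+1$. (Your anticipated $Z=p^{m-s-1}$ should be $p^{m-s-1}-1$, but the final weight $q-p^{m-s-1}$ is correct.)
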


\begin{proof}
For $p=3$, it is obvious that both entries of $\overline{\mathbf{c}}_{\alpha}^{\left(i\right)}$ are zero (or nonzero).
{Hence, one obtains
\begin{equation*}
\begin{split}
w_{2s+1}\left(\mathbf{c}_{\alpha}^{'}\right)
&=q-1-2\cdot\#\left\{0\leq i\leq h-1\,\bigg|\,{\rm Tr}\left(g^{i}\alpha\right)={\rm Tr}\left(g^{i+1}\alpha\right)=\cdots={\rm Tr}\left(g^{i+s}\alpha\right)=0\right\}
\\[1mm]&
=q-1-\left(p-1\right)\left(h-w_{s+1}\left(\mathbf{c}_{\alpha}^{''}\right)\right)
\\[1mm]&
=\left(p-1\right)w_{s+1}\left(\mathbf{c}_{\alpha}^{''}\right)
\\[1mm]&
=w_{s+1}\left(\mathbf{c}_{\alpha}\right)
\\[1mm]&
=\left\{
\begin{array}{ll}
q-p^{m-s-1}, & 1\leq s<m,\\[2mm]
q-1, & m\leq s\leq  \frac{q-3}{2}.\\
\end{array}
\right.
\end{split}
\end{equation*}
Here the last equality follows from Theorem \ref{thmb}.}
This completes the proof.
\end{proof}

Now we focus on the symbol $p$-weight of nonzero codewords in the variation simplex code over
$\mathbb{F}_{p}$ with  $p$ odd prime in a similar way of Proposition \ref{prop33}.

\begin{prop}\label{propbeqp}
Let $p$ be odd prime and $q=p^m$ with $m>1$.
The symbol $p$-weight of any nonzero codewords in the variation simplex code $\mathcal{C^{'}}$ in (\ref{eqvariationsc}) is $q-p^{m-2}$.
\end{prop}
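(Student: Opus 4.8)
The plan is to mimic the proof of Proposition~\ref{prop33}, exploiting the block structure of $\mathcal{C}^{'}$ to reduce the symbol $p$-weight of $\mathbf{c}_{\alpha}^{'}$ to a symbol-pair count on the standard simplex codeword $\mathbf{c}_{\alpha}^{''}=\left({\rm Tr}(\alpha),{\rm Tr}(g\alpha),\cdots,{\rm Tr}(g^{h-1}\alpha)\right)$. First I would record the structural fact used earlier: since $g^{hj}\in\mathbb{F}_{p}^{*}$ for $0\leq j\leq p-2$, one has ${\rm Tr}(g^{i+hj}\alpha)=g^{hj}{\rm Tr}(g^{i}\alpha)$, so each block $\overline{\mathbf{c}}_{\alpha}^{(i)}$ of length $p-1$ is entirely zero when ${\rm Tr}(g^{i}\alpha)=0$ and entirely nonzero otherwise. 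In particular, whether $\overline{\mathbf{c}}_{\alpha}^{(i)}$ is a zero block is governed precisely by the vanishing of the $i$-th coordinate of $\mathbf{c}_{\alpha}^{''}$.

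Next I would analyze how a symbol $p$-window sits against this decomposition. As each block has length $p-1$ and a window has length $p=(p-1)+1$, a window cannot fit inside one block; nor can it meet three consecutive blocks, since covering a full interior block together with an entry on each side needs at least $p+1$ positions. Hence every length-$p$ window straddles exactly two cyclically consecutive blocks $\overline{\mathbf{c}}_{\alpha}^{(i)},\overline{\mathbf{c}}_{\alpha}^{(i+1)}$ and meets at least one entry of each. Because every block is all-zero or all-nonzero, such a window is the zero window if and only if both $\overline{\mathbf{c}}_{\alpha}^{(i)}$ and $\overline{\mathbf{c}}_{\alpha}^{(i+1)}$ are zero blocks. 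Counting the admissible starting positions $s=i(p-1)+r$ with $0\leq r\leq p-2$ shows that exactly $p-1$ windows straddle each consecutive block-pair (including the wrap-around pair $\overline{\mathbf{c}}_{\alpha}^{(h-1)},\overline{\mathbf{c}}_{\alpha}^{(0)}$), so the number of all-zero windows equals $(p-1)Z$, where $Z=\#\{i\in\mathbb{Z}_{h}\mid{\rm Tr}(g^{i}\alpha)={\rm Tr}(g^{i+1}\alpha)=0\}$.

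Then I would translate $Z$ back to $\mathbf{c}_{\alpha}^{''}$. A length-$2$ window of $\mathbf{c}_{\alpha}^{''}$ at position $i$ vanishes exactly when ${\rm Tr}(g^{i}\alpha)={\rm Tr}(g^{i+1}\alpha)=0$, whence $Z=h-w_{2}(\mathbf{c}_{\alpha}^{''})$. Using $(p-1)h=q-1$ this yields
\[
w_{p}(\mathbf{c}_{\alpha}^{'})=(q-1)-(p-1)Z=(p-1)\,w_{2}(\mathbf{c}_{\alpha}^{''}).
\]
Finally I would invoke Corollary~\ref{cor1} with $b=2$: when $m>2$ one has $w_{2}(\mathbf{c}_{\alpha}^{''})=\frac{q-p^{m-2}}{p-1}$, and when $m=2$ (so $b=m$) one has $w_{2}(\mathbf{c}_{\alpha}^{''})=\frac{q-1}{p-1}$; in both cases $(p-1)w_{2}(\mathbf{c}_{\alpha}^{''})=q-p^{m-2}$, which is the asserted value.

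The main obstacle is the combinatorial bookkeeping in the second paragraph: establishing rigorously that a length-$p$ window meets exactly two consecutive blocks and at least one entry of each, correctly handling the cyclic wrap-around, and verifying that each consecutive block-pair is hit by precisely $p-1$ windows. Once this is pinned down, the remainder is a direct substitution into Corollary~\ref{cor1}, with only the minor case split $m=2$ versus $m>2$ requiring care.
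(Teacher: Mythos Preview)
Your proposal is correct and follows the same route as the paper: both reduce $w_{p}(\mathbf{c}_{\alpha}^{'})$ to $(p-1)\,w_{2}(\mathbf{c}_{\alpha}^{''})$ via the block structure and then invoke Corollary~\ref{cor1}. The paper simply asserts the identity $w_{p}(\mathbf{c}_{\alpha}^{'})=q-1-(p-1)\,\#\{\,i\mid{\rm Tr}(g^{i}\alpha)={\rm Tr}(g^{i+1}\alpha)=0\,\}$ without argument, whereas you supply the window-versus-block bookkeeping explicitly; you are also slightly more careful in separating the cases $m=2$ and $m>2$ when applying Corollary~\ref{cor1} (the paper uses the $b<m$ formula uniformly, which is harmless because the two expressions coincide at $m=2$).
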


\begin{proof}
When $m\geq 2$, by Corollary \ref{cor1}, one obtains
\begin{equation*}
\begin{split}
w_{p}\left(\mathbf{c}_{\alpha}^{'}\right)
&=q-1-\left(p-1\right)\cdot\#\left\{0\leq i\leq h-1\,\bigg|\,{\rm Tr}\left(g^{i}\alpha\right)={\rm Tr}\left(g^{i+1}\alpha\right)=0\right\}
\\[1mm]&
=q-1-\left(p-1\right)\left(h-w_{2}\left(\mathbf{c}_{\alpha}^{''}\right)\right)
\\[1mm]&
=q-1-\left(p-1\right)\left(\frac{q-1}{p-1}-\frac{q-p^{m-2}}{p-1}\right)
\\[1mm]&
=q-p^{m-2}.
\end{split}
\end{equation*}
This completes the proof.
\end{proof}

\begin{remark}
Note that by Theorem \ref{thmb}, Propositions \ref{prop33} and \ref{propbeqp}, for odd $b$ with $3\leq b\leq q-1$, the symbol $b$-weight of nonzero codewords in $\mathcal{C}$ and $\mathcal{C^{'}}$ over $\mathbb{F}_{3^{m}}$ are different.
And for $m\geq p>2$, the symbol $p$-weight of nonzero codewords in $\mathcal{C}$ and $\mathcal{C^{'}}$ over $\mathbb{F}_{p}$ are different as $q-p^{m-p}\neq q-p^{m-2}.$

Furthermore, for $p=3$ and $m\geq 3$, it can be checked that the result in Proposition \ref{prop33} coincides with Proposition \ref{propbeqp}.
However, for even $b$ with $4\leq b\leq q-1$, $p=3$ and for $b\ne p$ with $3\leq b\leq q-1$, $p>3$, it seems not applicable to determine the symbol $b$-weight of codewords in $\mathcal{C^{'}}$ using a similar method in Propositions \ref{prop33} and \ref{propbeqp}.
\end{remark}

\section{Conclusions}

In this paper, the symbol-pair weight distributions of MDS codes and simplex codes are characterized.
By enumeration of different Shapes and utilizing shortened code of MDS codes, one derives the symbol-pair weight distribution of MDS codes.
For any $b$, the symbol $b$-weight distribution of simplex codes is presented and rearranging entries of simplex codes may lead to different symbol-pair weight.
In addition, for odd $b$ with $3\leq b\leq q-1$, the symbol $b$-weight of simplex code over $\mathbb{F}_{3}$ and for any $m>1$, the symbol $p$-weight of codewords in variation simplex codes over $\mathbb{F}_{p}$ are determined.

\end{document}